\title{Interpretable hypothesis tests}
\author{Victor Coscrato, 
Lu\'{i}s G. Esteves, 
Rafael Izbicki, 
Rafael B. Stern}
\author{Victor Coscrato \\ Federal University of S\~ao Carlos \\ \and  
	Lu\'is Gustavo Esteves							\\
	University of S\~ao Paulo					\\ 
	\and   Rafael Izbicki							\\
	Federal University of S\~ao Carlos \\
	\and Rafael Bassi Stern						\\
	Federal University of S\~ao Carlos \\
}
\declaretheorem[name=Theorem, refname={Theorem, Theorems}, Refname={Theorem, Theorems}, parent=section]{theorem}
\declaretheorem[name=Definition, refname={Definition, Definitions}, Refname={Definition, Definitions}, sibling=theorem, style=definition]{definition}
\declaretheorem[name=Example, refname={Example, Examples}, Refname={Example, Examples}, sibling=theorem, style=definition]{example}
\crefname{section}{section}{sections}
\Crefname{section}{Section}{Sections}
\crefname{table}{table}{tables}
\Crefname{table}{Table}{Tables}
\setlist[enumerate]{leftmargin=*}
\def\half{\frac{1}{2}}
\def\I{{\mathbb I}}
\def\x{{\vec{x}}}
\def\X{{\vec{X}}}
\def\P{{\mathbb P}}
\def\Re{{\mathbb R}}
\def\sD{\mathcal{D}}
\def\sX{\mathcal{X}}
\def\t0{{\theta_0}}
\def\ts{{\theta^*}}
\def\tso{{\theta^*_1}}
\def\tst{{\theta^*_2}}
\definecolor{darkolivegreen}{rgb}{0.33, 0.49, 0.18}
\renewcommand{\vec}[1]{\mathbf{#1}}
\def\to{{\theta_1}}
\def\ts{{\theta^*}}
\def\tz{{\theta_0}}
\def\Tz{{\Theta_0}}
\def\Tzc{{\Theta_0^c}}
\def\Z{{\vec{Z}}}
\def\z{{\vec{z}}}
\def\sD{\mathcal{D}}
\def\sZ{\mathcal{Z}}
\def\sX{\mathcal{X}}
\def\sD{\mathcal{D}}
\def\sZ{\mathcal{Z}}
\def\sX{\mathcal{X}}
\newcommand{\gfbstfig}{
 \begin{tikzpicture}[thick,scale=0.75, every node/.style={scale=0.9}]
  \draw (-6,2) circle (1.5);
  \draw [fill=lightgray] (-6,2) circle (0.8);
  \draw (-8,0) -- (-4,0) -- (-4,4) -- (-8,4) -- (-8,0);;
  \node at (-6,2) {{\large$R(\x)$}}; 
  \node at (-6,0.8) {{\large$H_0$}};
  \node at (-7.5,.50) {{\large$H_0^c$}};
  \node at (-6,4.25) {{\large $\phi(\x)=0$}};
			
  \draw (-1.4,2.4) circle (1.1);
  \draw [fill=lightgray] (0.2,0.9) circle (0.7);
  \draw (-3,0) -- (1,0) -- (1,4) -- (-3,4) -- (-3,0);;
  \node at (0.2,0.9) {{\large$R(\x)$}}; 
  \node at (-1.4,2.4) {{\large$H_0$}};
  \node at (-2.5,.50) {{\large$H_0^c$}}; 
  \node at (-1,4.25) {{\large $\phi(\x)=1$}};	
			
  \draw [fill=lightgray] (5.2,1.9) circle (0.7); 
  \draw (4.4,2.4) circle (1.1);
  \draw (2,0) -- (6,0) -- (6,4) -- (2,4) -- (2,0);;
  \node at (5.2,1.9) {{\large$R(\x)$}}; 
  \node at (4.2,2.5) {{\large$H_0$}};
  \node at (2.5,.50) {{\large$H_0^c$}};
  \node at (4,4.25) {{\large $\phi(\x)=1/2	$}};	
 \end{tikzpicture} \\ 
}
\renewcommand{\algorithmicrequire}{\textbf{\small Input:}}
\renewcommand{\algorithmicensure}{\textbf{\small Output:}}
\xpatchcmd{\algorithmic}{\setcounter}{\algorithmicfont\setcounter}{}{}
\providecommand{\algorithmicfont}{}
\begin{document}
\maketitle

\begin{abstract}
 \textbf{Abstract}: 
 Although hypothesis tests play
 a prominent role in Science,
 their interpretation can be challenging.
 Three issues are
 (i) the  difficulty in
 making an assertive decision 
 based on the output of an hypothesis test,
 (ii) the logical contradictions that occur in
 multiple hypothesis testing, and (iii)
 the possible lack of practical importance
 when rejecting a precise hypothesis.
 These issues can be addressed through
 the use of agnostic tests and
 pragmatic hypotheses.
\end{abstract}

 \textbf{Keywords: Hypothesis testing, agnostic tests, pragmatic hypotheses}

\textbf{ Running title: Interpretable  hypothesis tests}

\section{The elements of
interpretable hypothesis tests}
 
Although hypothesis tests play
a prominent role in Science,
its importance has been downplayed recently
\citep{Diggle2011,Wasserman2013,Trafimow2018,Pike2019}.
A major reason why hypothesis tests have been
criticized is that they can be difficult to interpret and
can even lead to misleading conclusions
\citep{Greenland2016,Kadane2016,Wasserstein2019}.
At least three issues contribute to these challenges:

\begin{enumerate}[label=$\bullet$ \textbf{Issue (\roman*)}:, wide, labelwidth=!, labelindent=0pt]
 \item \textbf{At least one of the outcomes of
 a standard hypothesis test is hard to interpret}.
 While a standard hypothesis test can either
 reject or not reject the null hypothesis,
 the data can lead to at least three types of credal state:
 favor the null hypothesis, favor the alternate hypothesis,
 or remain undecided. Therefore,
 the test will assign to the same output at least
 two different credal states.
 For example, standard frequentist hypothesis tests usually
 do not control type II error probabilities. 
 As a result, the non-rejection of 
 the null hypothesis, $H_0$, can either be
 due to lack of evidence to reject $H_0$ or
 due to evidence in favor of $H_0$ \citep{Fisher1959}.
 For instance, consider that $X \sim N(\theta, 1)$ and
 $H_0: \theta \geq 0$. In a z-test, 
 $H_0$ is not rejected no matter whether
 $X$ is close to $0$ or $X$ is very large.
 However, while in the former case
 there is little evidence in favor or against $H_0$, 
 in the latter there is evidence in favor of $H_0$.
 \citet{Edwards1963} approaches this junction by
 stating that ``if the null hypothesis is not rejected, 
 it remains in a kind of limbo of suspended disbelief''.
 As a result, although in practice
 the non-rejection of $H_0$ is
 often taken as evidence in favor of $H_0$,
 this conclusion is not warranted by the test.

 \item \textbf{For standard Bayesian and classical tests,
 multiple hypothesis testing can lead to
 logically incoherent conclusions} \citep{Izbicki2015,Fossaluza2017}.
 For example, for a given dataset,
 a test might reject $H_0: \theta \leq 0$ and
 not reject $H_0: \theta = 0$, although
 the latter implies the former. Similarly, a test might
 reject both $H_0$ and $H_0^*$ but not reject
 $H_0 \cup H_0^*$. These logical contradictions are
 hard to interpret and explain.

 \item \textbf{When a precise hypothesis is rejected,
 this outcome does not mean that
 the rejection is relevant from 
 a practical perspective}.
 For instance, consider that populations $1$ and $2$
 are composed of, respectively, healthy and sick persons.
 Furthermore, for each person, 
 one can observe a clinical variable, $X$,
 such as the patient's average blood glucose level.
 Assume that, if $X_i$ is
 a person from population $i$, then
 $X_i \sim N(\theta_i,1)$.
 Rejecting that the populations are the same,
 $H_0: \theta_1 = \theta_2$, does not
 imply that $X$ can be used to determine whether
 a person is healthy or sick.
 For instance, when the sample size is large,
 one might reject $H_0$ although
 $|\theta_1-\theta_2|$ is not equal but close to $0$.
 In this case $X$ cannot effectively be used to
 determine whether a person is healthy or sick.
 This result can lead to counter-intuitive policies,
 such as considering an experiment to 
 be inadequate from a statistical perspective because
 its sample size is too large \citep{Faber2014}.
 Although solutions to this problem have been proposed,
 such as considering effect sizes \citep{Cohen1992},
 they also increase the difficulty in
 interpreting hypothesis tests.
\end{enumerate}

This paper shows that 
the above challenges in interpretation are
avoided by making simple changes to
the practice of hypothesis tests.
These changes have two key components:

\begin{enumerate}[label=(\alph*), wide, labelwidth=!, labelindent=0pt]
 \item \emph{Agnostic hypothesis tests}
 \citep{Neyman1976,Berg2004},
 which parallel agnostic classifiers 
 \citep{Lei2014,Jeske2017,Jeske2017b,Sadinle2017} and
 allow three possible results:
 reject $H_0$, accept $H_0$, or remain agnostic.
 The last option permits a test to control both
 the type I and type II errors \citep{Coscrato2018},
 avoiding the ``limbo of suspended disbelief'' described in issue (i)
 that follows from the
 non-rejection of $H_0$ in standard tests.
 This occurs because the option of
 remaining agnostic allows
 the test to explicitly indicate when data does not
 provide substantial evidence either
 in favor or against $H_0$.
 Although agnostic tests introduce a type III error,
 which occurs whenever the test remains agnostic,
 this error is qualitatively different from
 the errors of type I and II.
 While it is unknown when the latter errors occur,
 the error of type III is known.
 Hence, the user of an agnostic test
 can control unknown errors and
 either acknowledge errors of type III or
 correct them by, for example,
 collecting more data.
 Besides these benefits,
 \citet{Esteves2016,Stern2017} show that,
 as opposed to standard tests,
 agnostic tests can guarantee
 logically coherent conclusions in
 multiple hypothesis testing, solving
 issue (ii).

 \item \emph{Pragmatic hypothesis}, which
 substitute precise hypotheses
 whenever the goal of the test is
 to determine variables with
 good predictive capabilities.
 For instance, let
 $X_1$ and $X_2$  be
 average blood glucose levels of
 healthy and sick persons and
 $X_i \sim N(\theta_i, \sigma^2)$.
 If one wishes to discover whether
 the blood glucose level is
 useful in determining whether
 an individual is healthy or sick, then
 the rejection of an hypothesis such as
 $Pg(H_0): |\theta_1-\theta_2| > k\sigma$ 
 \citep{Chow2016}  is more informative than
 the rejection of $H_0: \theta_1 \neq \theta_2$, which solves issue (iii).
 Similar ideas for augmenting the null hypothesis 
 have previously been proposed in
 \citep{Berger2013,Degroot2012}.   
\end{enumerate}

The following sections define, illustrate and
describe how to build agnostic tests and
pragmatic hypotheses. This task requires additional notation.
Specifically, the hypotheses that are considered are
propositions about a parameter,
$\theta \in \Theta \subseteq \Re^d$.
A null hypothesis is a proposition of
the form $H_0: \theta \in \Tz$, where
$\Tz \subseteq \Theta$ and
the alternative hypothesis, $H_1$,
is $H_1: \theta \in \Tzc$.
Whenever there is no ambiguity,
$\Tz$ is used instead of $H_0$.
In order to test $H_0$,
data, $\X \in \sX$, is used.
Finally, the data follows
a distribution given by
$\P_{\tz}$ when $\theta=\tz \in \Theta$.

\section{Agnostic hypothesis tests}
\label{sec:agnostic}

\setlength{\epigraphwidth}{11.5cm}
\epigraph{``The phrase `do not reject H' is
longish and cumbersome \ldots (This action) should be
distinguished from (the ones in) a 
`three-decision problem' (in which the) actions are:
(a) accept H, (b) reject H, and
(c) remain in doubt.''}{\citet{Neyman1976}}

A challenge in the interpretation of
standard hypothesis tests is that
they must always conclude one out of
two possibilities. Although 
only two conclusions are available, data lead to
at least three credal states:
strongly disfavor $H_0$, strongly favor $H_0$ or
not strongly favor or disfavor $H_0$.
Standard tests usually assign
the latter two states to
the "non-rejection" of $H_0$.
As a result, standard tests
assign datasets which are
qualitatively different to
the same conclusion.

This challenge is addressed by
agnostic tests, which can
accept $H_0$ (0), reject $H_0$ (1) or
remain undecided $\left(\half\right)$.
The set of possible outcomes of such a test is
denoted by $\mathcal{D} = \left\{0,\half,1\right\}$.

\begin{definition}
 \label{def:agnostic}
 An agnostic test is a function,
 $\phi: \sX \rightarrow \sD$.
\end{definition}

\begin{definition}
 \label{def:std}
 An agnostic test, $\phi$, is a
 standard test if Im$[\phi]=\{0,1\}$.
\end{definition}

An agnostic z-test is presented in
\cref{ex:gaussian}.

\begin{example}
 \label{ex:gaussian}
 Let $X \sim N(\theta,1)$. 
 The usual $\alpha$-level z-test for 
 $H_0:\theta_0 \geq 0$ determines:
 \begin{equation*}
  \begin{cases}
    \text{reject $H_0$}        & \text{, if } X \leq \Phi^{-1}(\alpha) \\
    \text{don't  reject $H_0$} & \text{, if } X > \Phi^{-1}(\alpha)
  \end{cases}
 \end{equation*}
 For $\alpha=0.05$, $\Phi^{-1}(\alpha)$ is approximately $-1.64$.
 Therefore, no matter whether $x=-0.5$ or $x=10^{100}$,
 $H_0$ is not rejected. That is, 
 no assertive decision about $H_0$ is obtained in either case
 although $x=10^{100}$ favors $H_0$ and
 $x= -0.5$ does not.
 
 Alternatively, one can test $H_0$ with
 an agnostic test as follows:
 \begin{equation}
  \label{eq:agnostic_1}
  \begin{cases}
   \text{reject $H_0$} & \text{, if } X < -\Phi^{-1}(0.5\alpha) \\
   \text{accept $H_0$} & \text{, if } X > \Phi^{-1}(1-0.5\alpha) \\
   \text{remain agnostic} & \text{, otherwise}
  \end{cases}
 \end{equation}
 For $\alpha=0.05$, $\Phi^{-1}(1-0.5\alpha)$ is approximately $1.96$.
 Therefore, while $x=-0.5$ leads to an agnostic decision,
 $x=10^{100}$ leads to the assertive decision of accepting $H_0$.
 Contrary to the standard z-test,
 the agnostic test distinguishes these
 qualitatively different types of data.
\end{example}

An agnostic test can have $3$ types of errors.
The type I and type II errors of
agnostic tests are defined in the same way
as those of standard tests. That is, 
a type I error occurs when the test 
rejects $H_0$ and $H_0$ is true.
Similarly, a type II error occurs when
the test accepts $H_0$ and $H_0$ is false.
A type III error occurs whenever the
test remains agnostic. That is,
contrary to type I and type II errors,
one knows when type III errors occur.
Given this asymmetry, one might design
either frequentist or Bayesian tests that
control the errors of type I and II,
as presented in \cref{defn:level} and
\cref{defn:blevel}.

\begin{definition}
 \label{defn:level}
 An agnostic test, $\phi$, has
 $(\alpha,\beta)$-level if the
 test's probabilities of committing errors
 of type I and II are controlled
 by, respectively, $\alpha$ and $\beta$.
 That is,
 \begin{align*}
  \alpha_{\phi} &:= 
  \sup_{\tz \in H_0}
  \P_{\tz}(\phi = 1) \leq \alpha \\
  \beta_{\phi} &:= 
  \sup_{\to \in H_1}
  \P_{\to}(\phi = 0) \leq \beta
 \end{align*}
 Similarly, $\phi$, has size
 $(\alpha,\beta)$ if
 $\alpha_{\phi} = \alpha$ and
 $\beta_{\phi} = \beta$.
\end{definition}

\begin{definition}
 \label{defn:blevel}
 An agnostic test, $\phi$, has
 false conclusion probability of $\gamma$
 according to a prior distribution over $\theta$,
 $f_{\theta}: \Theta \rightarrow \Re^{+}$, if
 \begin{align*}
  \gamma_{\phi} &:= 
  \int_{\tz \in H_0}\P_{\tz}(\phi = 1)f_{\theta}(\tz)d\tz
  +\int_{\to \in H_1}\P_{\to}(\phi = 0)f_{\theta}(\to)d\to
  = \gamma
 \end{align*}
\end{definition}

There are several ways of controlling 
the errors above. For instance,
\citep{Esteves2016,Coscrato2018}
discuss approaches based on 
statistical decision theory.
The following subsection presents an
agnostic test that controls 
the errors above while
preserving other properties,
such as logical consistency.

\subsection{Region-based agnostic tests}
\label{sec:region}

Agnostic tests can be constructed
through region estimators.
A region estimator is a function that
assigns a subset of the parameter space to
each possible dataset.
Generally, a region estimator can
be interpreted as a set of
plausible values for $\theta$.
For instance, when $\Theta = \Re$,
confidence and credible intervals are
region estimators.

\begin{definition}
 \label{defn:region}
 A region estimator, $R$, is a function
 from $\sX$ such that $R(\x) \subseteq \Theta$.
\end{definition}

It is possible to completely specify
an agnostic test by means of a region estimator.
Based on the idea that the region estimator
indicates the plausible values for $\theta$,
there are three cases to consider.
If all plausible values lie in $H_0$, then
there is strong evidence in favor of $H_0$ and
$H_0$ is accepted. Also,
if all plausible values lie outside of $H_0$, then
there is strong evidence against $H_0$ and
$H_0$ is rejected. Finally,
if there are plausible values both in and
outside of $H_0$, then $H_0$ remains undecided.
\Cref{defn:agnostic-region} formalizes this description.

\begin{definition}
 \label{defn:agnostic-region}
 The agnostic test based on $R$ for
 testing $H_0$, $\phi$,
 illustrated in \Cref{fig::region}, is
 \begin{align*}
  \phi(\x) &=
  \begin{cases}
   0 & \text{, if } R(\x) \subseteq H_0 \\
   1 & \text{, if } R(\x) \subseteq H_0^c \\
   \half & \text{, otherwise.} \\
  \end{cases}
 \end{align*}
 \begin{figure}
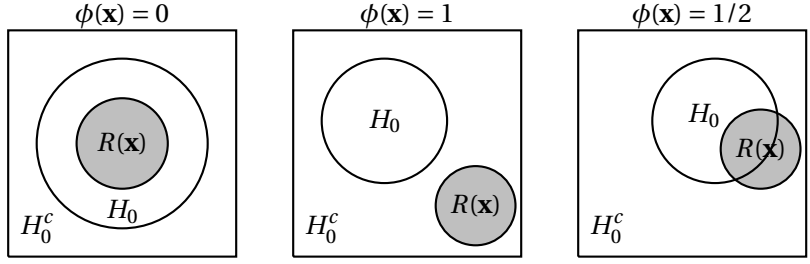

  \center
  \gfbstfig
  \mbox{} \vspace{-2mm} \mbox{} \\ 
  \caption{$\phi(\x)$ is
  an agnostic test based on 
  the region estimator, $R(\x)$, 
  for testing $H_0$.}
  \label{fig::region}
 \end{figure}
\end{definition}

If an agnostic tests is based on a region estimator
that is a (frequentist) confidence set, then 
the size of the test is controlled,
as described in \cref{thm:region-level}.

\begin{theorem}[\citet{Coscrato2018}]
 \label{thm:region-level}
 If $R(\x)$ is a region estimator for
 $\theta$ with confidence $1-\alpha$ and
 $\phi$ is an agnostic test for $H_0$
 based on $R$ (\cref{defn:agnostic-region}), then
 $\phi$ is a $(\alpha,\alpha)$-level test. Also,
 for every prior distribution over $\theta$, $f_\theta$,
 $\gamma_{\phi} \leq \alpha$ according to $f_\theta$.
\end{theorem}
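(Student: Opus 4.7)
The plan is to reduce both the frequentist level claims and the Bayesian false conclusion bound to the single coverage inequality $\P_\theta(\theta \in R(\X)) \geq 1-\alpha$ that defines a $(1-\alpha)$-confidence region. The key observation is the two implications built into \cref{defn:agnostic-region}: whenever $\phi(\x)=1$ we have $R(\x)\subseteq H_0^c$, and whenever $\phi(\x)=0$ we have $R(\x)\subseteq H_0$. Each implication lets us upper bound a rejection/acceptance event by a non-coverage event for a suitably chosen parameter.

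First I would bound the type I error. Fix any $\tz \in \Tz$. Since $\phi(\x)=1$ forces $R(\x)\subseteq \Tzc$, we get $\tz \notin R(\x)$, so
\begin{equation*}
 \P_{\tz}(\phi=1) \leq \P_{\tz}(\tz \notin R(\X)) \leq \alpha.
\end{equation*}
Taking the supremum over $\tz \in \Tz$ yields $\alpha_\phi \leq \alpha$. The type II error bound is symmetric: for any $\to \in \Tzc$, the event $\phi(\x)=0$ forces $R(\x)\subseteq \Tz$, hence $\to \notin R(\x)$, and the same confidence inequality gives $\P_{\to}(\phi=0) \leq \alpha$, so $\beta_\phi \leq \alpha$.

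For the Bayesian statement I would simply plug these two pointwise bounds into \cref{defn:blevel}. Each integrand is dominated by $\alpha\, f_\theta(\cdot)$, and the two integration regions $\Tz$ and $\Tzc$ partition $\Theta$, so
\begin{equation*}
 \gamma_\phi \leq \alpha\int_{\Tz} f_\theta(\tz)\,d\tz + \alpha\int_{\Tzc} f_\theta(\to)\,d\to = \alpha.
\end{equation*}

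Honestly there is no real obstacle here; the proof is a one-line application of the definitions once the containments $\{\phi=1\}\subseteq\{R\subseteq\Tzc\}$ and $\{\phi=0\}\subseteq\{R\subseteq\Tz\}$ are noted. The only thing worth being careful about is not confusing the direction of the set inclusions, since it is precisely the fact that $R$ lies \emph{entirely} inside $\Tzc$ (resp.\ $\Tz$) that certifies the true $\theta$ is outside $R$ when an error occurs; without the agnostic option, $\phi=1$ would not force $R\subseteq \Tzc$ and this clean argument would break down.
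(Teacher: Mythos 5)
Your proof is correct and follows essentially the same route as the paper: the paper defers the proof of \cref{thm:region-level} to the cited reference, but its own appendix proof of the companion result (\cref{thm:region-fcp}) uses exactly your argument, namely the containments $\{\phi=1\}\subseteq\{R(\X)\subseteq H_0^c\}\subseteq\{\tz\notin R(\X)\}$ and $\{\phi=0\}\subseteq\{R(\X)\subseteq H_0\}\subseteq\{\to\notin R(\X)\}$ followed by the coverage bound and, for the Bayesian claim, integration against the prior. No gaps.
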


Similarly, if an agnostic test is based on 
a (Bayesian) credible set instead of a confidence set, then
it controls the false conclusion probability,
as described in \cref{thm:region-fcp}.

\begin{theorem}
 \label{thm:region-fcp}
 If $R(\x)$ is a region estimator for
 $\theta$ with credibility $1-\gamma$
 according to $f_{\theta}$ and
 $\phi$ is an agnostic test for $H_0$
 based on $R$ (\cref{defn:agnostic-region}), then
 $\gamma_{\phi} \leq \gamma$
 according to $f_{\theta}$.
\end{theorem}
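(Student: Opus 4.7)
The plan is to recognize the false conclusion probability as a marginal probability under the joint law of $(\theta,\X)$ induced by the prior $f_\theta$ and the sampling distributions $\P_\theta$, and then to absorb both error events into the single event that $R(\X)$ misses $\theta$.

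First I would rewrite the quantity in \cref{defn:blevel} as
\begin{equation*}
 \gamma_\phi = \P\bigl(\phi(\X)=1,\ \theta\in\Tz\bigr)
              + \P\bigl(\phi(\X)=0,\ \theta\in\Tzc\bigr),
\end{equation*}
where $\P$ denotes the joint distribution of $(\theta,\X)$ with $\theta\sim f_\theta$ and $\X\mid\theta\sim\P_\theta$. The two events on the right are disjoint, since $\phi(\X)$ cannot equal both $0$ and $1$.

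Next I would use the definition of the region-based test (\cref{defn:agnostic-region}) to link each error event to the event $\{\theta\notin R(\X)\}$. Indeed, $\phi(\X)=1$ forces $R(\X)\subseteq\Tzc$, and combined with $\theta\in\Tz$ this gives $\theta\notin R(\X)$. Symmetrically, $\phi(\X)=0$ forces $R(\X)\subseteq\Tz$, and with $\theta\in\Tzc$ one again gets $\theta\notin R(\X)$. Hence the union of the two error events is a subset of $\{\theta\notin R(\X)\}$, so
\begin{equation*}
 \gamma_\phi \;\leq\; \P\bigl(\theta\notin R(\X)\bigr).
\end{equation*}

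Finally I would invoke the hypothesis that $R$ has credibility $1-\gamma$ with respect to $f_\theta$. Writing this assumption as $\P(\theta\in R(\X))\geq 1-\gamma$ (equivalently, the posterior probability that $\theta\in R(\x)$ given $\X=\x$ is at least $1-\gamma$ for $\X$-almost every $\x$, which integrates to the same marginal bound), we obtain $\P(\theta\notin R(\X))\leq \gamma$, and therefore $\gamma_\phi\leq\gamma$.

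The only step that requires any care is the last one: one must be explicit that ``credibility $1-\gamma$'' is being used in its marginal (joint) form, since \cref{defn:blevel} is stated as an integral against the prior rather than as a posterior statement for a fixed $\x$. Once this is clarified, the proof is essentially a one-line containment of events, in direct analogy with the frequentist argument behind \cref{thm:region-level}.
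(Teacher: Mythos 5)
Your proof is correct and is essentially the paper's own argument: both bound the two error events by the single event $\{\theta\notin R(\X)\}$ via the containments forced by \cref{defn:agnostic-region}, and then invoke the credibility of $R$ in its marginal (prior-integrated) form. Phrasing it as a containment of events under the joint law of $(\theta,\X)$ rather than as integrals of $\P_{\theta}(\cdot)$ against $f_\theta$ is only a cosmetic difference.
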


\Cref{ex:gaussian} below shows how
\cref{thm:region-level,thm:region-fcp}
can construct agnostic z-tests.

\begin{example}[Continuation of \cref{ex:gaussian}]
 \label{ex:gaussian_2}
 The agnostic test in \cref{ex:gaussian} can be
 obtained from \cref{thm:region-level} by
 using the usual ($1-\alpha$)-level
 confidence interval given by
 $CI = [X-\Phi^{-1}(0.5\alpha),X-\Phi^{-1}(1-0.5\alpha)]$.
 The obtained test has $(\alpha,\alpha)$-level.
 
 One can also test $H_0$ by applying \cref{thm:region-fcp}.
 If $\theta \sim N(0,1)$, then
 a typical $1-\gamma$ credible interval for $\theta$ is
 $CI = \left[0.5X - \frac{\Phi^{-1}(0.5\gamma)}{\sqrt{2}} , 
 0.5X - \frac{\Phi^{-1}(1-0.5\gamma)}{\sqrt{2}}\right]$.
 The test based on this interval controls
 the false conclusion probability by $\gamma$ and
 behaves similarly to the test in \cref{eq:agnostic_1}.
\end{example}

An example of a test based on
a credible sets is the
Generalized Full Bayesian Significance Test (GFBST) 
\citep{Stern2017}, which is obtained
when the credibility set is
a highest posterior density set.
\Cref{thm:region-fcp} shows that
the GFBST controls the false conclusion probability.

Another useful property of
tests based on region estimators,
such as the ones obtained from
\cref{thm:region-level,thm:region-fcp},
is that they are the only logically coherent tests
\citep{Esteves2016}. That is, if
the same region estimator is used for
simultaneously testing several hypothesis, then
there will be no logical contradiction between
the conclusions of the tests. For instance,
a standard t-test can reject $\theta \leq 0$ and
not reject $\theta = 0$. Since $\theta \leq 0$ is
implied by $\theta = 0$, such a result is
a logical contradiction.
If an agnostic test based on
a region estimator rejects $\theta \leq 0$, then
it also rejects $\theta = 0$.

It follows from the logical coherence of
agnostic tests based on region estimators that
they are consistent with
propositional logic \citep{Stern2017}[Lemma 6.1].
For example, consider a class of
agnostic tests, $\phi$, based on
the same region estimator,
two hypotheses, $H_1$ and $H_2$, and
the logical proposition
$H^* = P(H_1, H_2) = (H_1 \cap H_2)^c$.
Since $\phi$ is consistent with propositional logic,
the outcome of testing $H^*$ with $\phi$ can be
determined by the outcome of testing
$H_1$ and $H_2$ with $\phi$. For example,
if $\phi$ rejects either $H_1$ or $H_2$,
then it accepts $H^*$,
if $\phi$ accepts $H_1$ and $H_2$, then
it rejects $H^*$, and
otherwise it remains agnostic about $H^*$.
The consistency with propositional logic not
only makes the test easier to interpret but
also makes simultaneous hypothesis testing
easier to implement. The latter occurs since
calculating the truth-value of a proposition is
generally less expensive computationally than 
a direct calculation of the test.



Despite the advantages of
region-based agnostic tests over
standard test, the former
usually do not accept precise hypotheses.
For instance, if $H_0: \theta = 0$, then
whenever a confidence interval contains
more than a single point, it follows from
\cref{defn:agnostic-region} that
$H_0$ is not accepted.
The following section argues that
this result is justifiable.
From a practical perspective,
whenever one wishes to
be able to accept the null hypothesis,
this hypothesis can be well represented by
a pragmatic hypothesis.

\section{Pragmatic hypotheses}
\label{sec:pragmatic}

\setlength{\epigraphwidth}{11.5cm}
\epigraph{``The null hypothesis is really a hazily defined
small region rather than a point.''}{\citet{Edwards1963}}

When the null hypothesis is stated as an equality,
it is often reasonable to enlarge it to
a set of values which are close to
satisfying the equality from
a practical perspective.
Such an enlarged hypothesis is
called a pragmatic hypothesis.
Although in some situations
the pragmatic hypothesis might be
derived from expert knowledge, 
this solution might not always be available.
This section presents a method for
deriving pragmatic hypotheses which
closely resembles the ones in \citet{Esteves2019}.

We assume that
the researcher is interested in predicting
a future experiment, $\Z \in \sZ$, which 
is distributed according to
a density, $f_{\Z}(\z|\theta)$.
This future experiment can be different from
$\X$, which is used to test the hypothesis.
Specifically, the hypothesis is tested in the present using
$\X$ so that accurate predictions about $\Z$ can
be made in the future.

For a given future experiment $\Z$,
one can determine which values of $\theta$ make
$\Z$ behave similarly.
A \emph{predictive dissimilarity function}, 
$d_{\Z}: \Theta^2 \rightarrow \Re^{+}$ is a function
$d_{\Z}(\t0, \ts)$  that measures how much $\Z$
behaves differently under $\t0$ and under $\ts$.
We focus on the \emph{classification dissimilarity}:
\begin{align}
 \label{defn:dissim}
 d_{\Z}(\t0,\ts)
 &= 0.5\left(\P_{\t0}\left(\frac{f(\Z|\t0)}{f(\Z|\ts)} > 1\right)
 + \P_{\ts}\left(\frac{f(\Z|\ts)}{f(\Z|\t0)} > 1 \right) \right)
\end{align}

The classification dissimilarity can be interpreted
using the Neyman-Pearson lemma \citep{Neyman1933} as follows.
Consider that $\Z$ was generated with
equal probability  either from $\t0$ or from $\ts$.
After observing $\Z$ in such a situation,
the classification dissimilarity is
the highest achievable probability of correctly identifying
which $\theta$ generated $\Z$.

Once a predictive dissimilarity function is chosen,
the pragmatic hypothesis associated to  $H_0$,
$Pg(H_0)$, is defined as
the set of parameter values whose
dissimilarity to $H_0$ is
at most $1-\epsilon$:
\begin{align*}
 Pg(H_0) &= \bigcup_{\tz \in H_0}
 \left\{\ts \in \Theta: d_{\Z}(\tz, \ts) < 1-\epsilon\right\}.
\end{align*}
If $\ts \notin Pg(H_0)$, then
$\Z$ can be used to discriminate between
$\ts$ and any given point in $H_0$ with
an accuracy of at least $1-\epsilon$.

This construction can be illustrated with
a test of equality between populations.
In this case, if a parameter value lies outside of
the pragmatic hypothesis, then
there exists a classifier based on $\Z$
with accuracy of at least $1-\epsilon$ for
determining which population generated $\Z$ 
(\cref{thm:cd_2pop}).
This procedure is applied to real data in
\cref{ex:classification}.

\begin{example}
 \label{ex:classification}
 The Cambridge Cognition Examination (CAMCOG)
 \citep{Roth1986} is a questionnaire that
 is used to measure the extent of dementia and
 assess the level of cognitive impairment.
 We use the data from \citet{Cecato2016} to
 check whether CAMCOG is able to
 distinguish three groups of patients:
 (i) control (CG),
 (ii) mild cognitive impairment (MCI), and
 (iii) Alzheimer's disease (AD).
 We assume that, if
 $Y_{i,j}$ is the score of the
 $j$-th patient in group $i$,
 then $Y_{i,j} = \mu_i + \epsilon_{i,j}$,
 where $\mu_i$ are the population averages in
 each group and $\epsilon_{i,j}$ are
 independent variables such that 
 $\epsilon_{i,j} \sim N(0,\sigma^2)$.
 
 \begin{figure}
  \centering
  \includegraphics[scale=0.6]{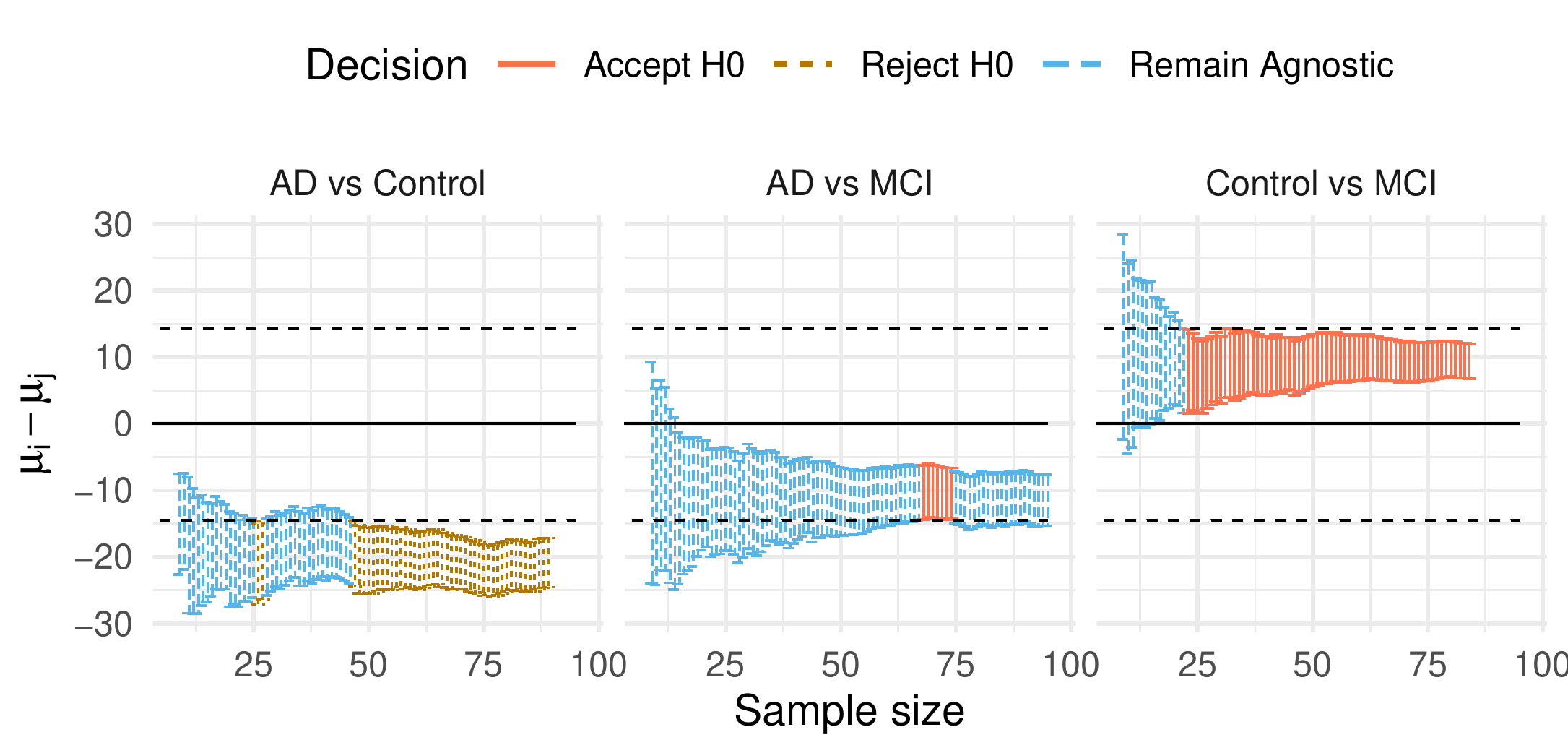}%
   \caption{Confidence intervals for
    the average difference between groups, $\mu_i-\mu_j$
    as a function of the sample size.
    The solid line indicates the precise hypothesis 
    considered in each figure, $H_0: \mu_i = \mu_j$.
    The dashed horizontal lines delimit
    the pragmatic null hypotheses that are
    induced by each precise hypothesis.}
  \label{fig:pragmatic_camcog}
 \end{figure}
 
 \Cref{fig:pragmatic_camcog} illustrates how
 the sample size affects
 region-based agnostic tests
 (\cref{defn:agnostic-region}) for
 testing the pragmatic hypotheses induced by
 $H_0: \mu_i - \mu_j = 0$.
 In each of the plots, the
 solid line indicates the
 precise hypothesis of interest,
 that is, the CAMCOG scores are
 equally distributed among
 the compared groups.
 The values of $\mu_i-\mu_j$ between
 the dashed lines compose
 the associated pragmatic hypothesis, $Pg(H_0)$.
 When the pragmatic hypothesis does not hold,
 there exists a classifier based on
 the CAMCOG score which highly discriminates
 the groups under comparison.
 For small sample sizes, the test remains agnostic
 about all of the three hypotheses.
 As the sample size increases,
 the pragmatic hypothesis associated to
 AD vs Control is rejected,
 the one for AD vs MCI is undecided and
 the one for Control vs MCI is accepted.
 That is, although it is unclear whether
 a classifier based on the CAMCOG score can
 highly discriminate between patients with AD and MCI,
 it can highly discriminate between AD and Control and
 cannot highly discriminate between Control and MCI.
 Since AD is an aggravation of MCI, these conclusions are
 compatible with qualitative knowledge.
\end{example}

In situations with several parameters,
it can be expensive to compute $Pg(H_0)$ exactly.
In these cases, it is possible to calculate
an approximation of $Pg(H_0)$. For example,
let $\theta(i)$ denote
the i-th coordinate of $\theta$ and
consider that $H_0: \theta(1) = \tz$.
In this case, an approximate pragmatic hypothesis,
$Pg^*(H_0)$, can be obtained by considering that
the remaining parameter coordinates are 
equal to a given estimate. Specifically,
\begin{enumerate}
 \item Estimate $\theta(2),\ldots,\theta(d)$ based on $\X$
 with $\hat{\theta}(2),\ldots,\hat{\theta}(d)$.
 
 \item Define $g: \Re \rightarrow \Re^d$ such that
 $g(t) = \left(t, \hat{\theta}(2),\ldots,\hat{\theta}(d)\right)$.
 
 \item Let
 $Pg^*(H_0) = \left\{\ts \in \Re: 
 d_{\Z}(g(\t0),g(\ts)) \leq \epsilon \right\}
 \times \Re^{d-1}$.
\end{enumerate}

\Cref{alg:pragmatic} in \Cref{sec:approx-pragm} shows how
such a procedure can be applied to
a generic model using Monte Carlo integration.
An implementation of this procedure in R 
is available at \url{https://github.com/vcoscrato/pragmatic}.
\Cref{ex:regression} illustrates this procedure in
the context of linear regression.

\begin{example}
 \label{ex:regression}
 An object dropped from a vertical distance $d$ from
 the ground takes $T = \sqrt{\frac{2d}{g}}$ units of
 time to reach the floor, where $g$ is
 Earth's gravitational acceleration.
 \citet[Chapter 2]{Diggle2011} describes
 a lab experiment for estimating $g$:
 a student drops an object from
 several heights and
 measures how long it takes to
 reach the ground by using a chronometer.
 Since the student has a reaction for
 activating and deactivating the chronometer,
 the data may be modeled as
 \begin{align*}
  T &= \beta_0 + \beta_1 x + \epsilon,
  & \text{where } x = \sqrt{d}, 
  \beta_1=\sqrt{2g^{-1}} \text{ and }
  \epsilon \sim N(0,1)
 \end{align*}
 
 One might be interested in testing $H_0: g = 9.8$.
 Besides $g$, the parameter space also includes
 the average reaction time of the students, $\beta_0$ and
 the imprecision in their measurements, $\sigma^2$.
 Although obtaining $Pg(H_0)$ is not intractable in this case,
 it would involve a search in a three-dimensional space.
 This procedure is simplified when calculating $Pg^*(H_0)$,
 in which $\beta_0$ and $\sigma^2$ are considered to be equal to
 their estimates that are obtained from 
 the observed sample, $\X$.
 As a result, determining $Pg^*(H_0)$ involves
 a search over a one-dimensional space only.
 
 \Cref{fig:pragmatic_g} illustrates how
 the sample size affects
 region-based agnostic tests for $Pg^*(H_0)$.
 In the left and right plots,
 the precise hypotheses are that, respectively,
 $H_0: g = 9.5$ and $H_0: g = 9.8$.
 In each plot, the values of $g$ delimited between
 the horizontal dashed lines constitute
 the pragmatic null hypotheses, $Pg^*(H_0)$.
 These pragmatic hypotheses are composed by
 values of $g$ which would induce
 predictions for future experiments in 
 a similar way as each precise hypothesis.
 These pragmatic hypotheses are
 tested with a region-based agnostic test.
 For small sample sizes,
 the test remains undecided about
 both pragmatic hypotheses.
 As the sample size increases,
 the test rejects that
 $g$ is close to $9.5$ and
 accepts that $g$ is close to $9.8$.
 The latter conclusion might seem incorrect,
 since $g \approx 9.807$. However,
 given the high imprecision in
 the experiment performed by the students,
 future experiments would behave similarly
 no matter whether $g = 9.807\ldots$ or $g = 9.8$. 
 \begin{figure}
 \centering
  \includegraphics[scale=0.6]{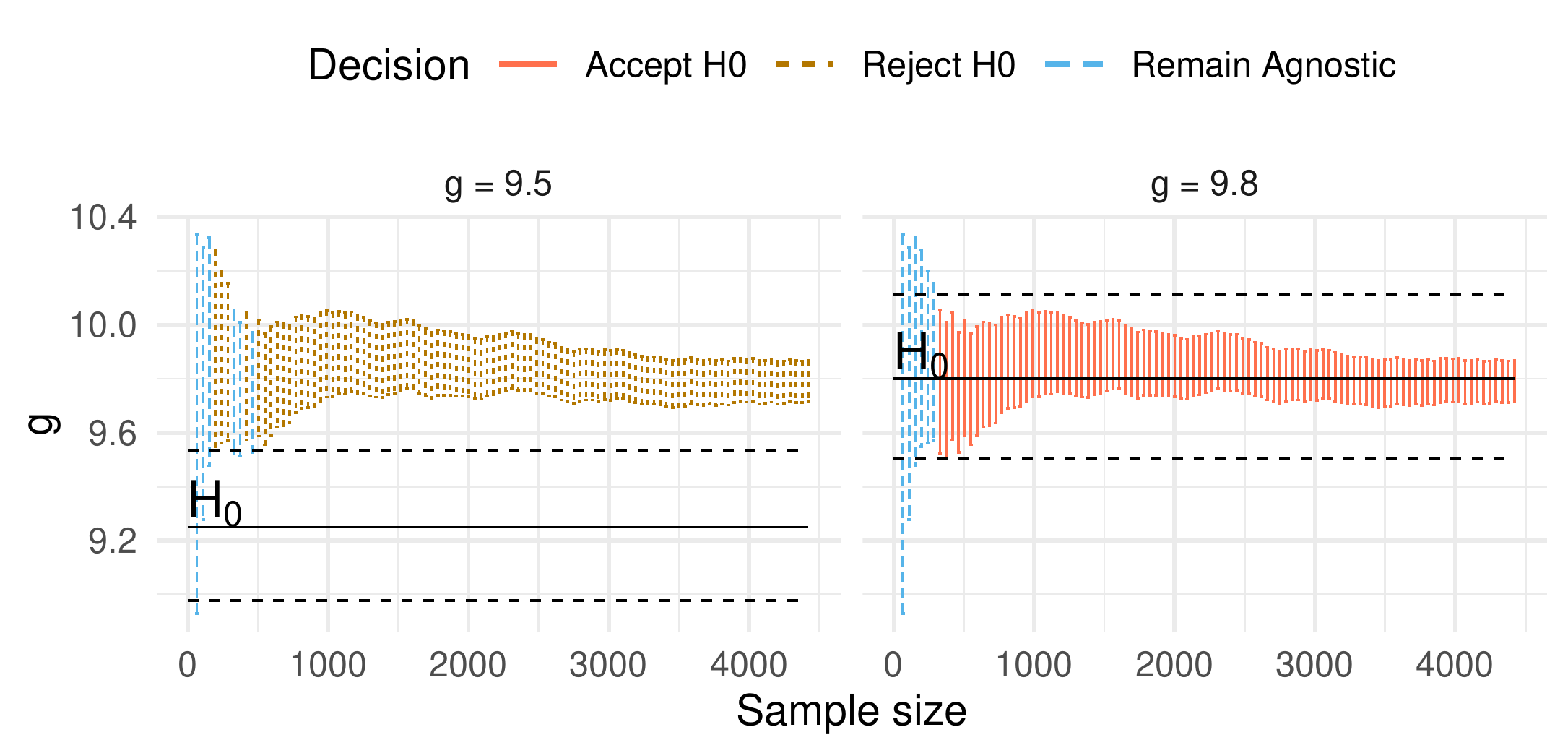}%
  \caption{Confidence intervals for
   the gravitational constant $g$ as a function of
   the sample size. The solid line indicates
   the precise hypothesis considered in each figure.
   The dashed horizontal lines delimit
   the pragmatic null hypotheses that are
   induced by each precise hypothesis.
  }
  \label{fig:pragmatic_g}
 \end{figure}
\end{example}

\section{Conclusions and future research}

Challenges in the interpretation of
standard hypothesis tests can
be addressed through
changes in statistical practice.
Agnostic hypothesis tests
lead to test outputs that
are easier to interpret and
also avoid logical contradictions in
multiple hypothesis testing.
Also, the use of pragmatic hypotheses
render that the rejection of
the null hypothesis is of practical importance.
\cref{ex:classification,ex:regression}
illustrate how these improvements admit
a simple implementation in standard models.

This paper also provides a general method for
obtaining approximate pragmatic hypotheses in
parametric statistical models.
Future research might involve obtaining
pragmatic hypothesis in nonparametric models and
a decision-theoretic approach to agnostic tests.

\section*{Acknowledgments}

This study was also financed in part by
the Coordena\c{c}\~ao de Aperfei\c{c}oamento
de Pessoal de N\'ivel Superior - Brasil (CAPES) -
Finance Code 001. Rafael Izbicki is
grateful for the financial support of
FAPESP (2017/03363-8) and CNPq (306943/2017-4).

\bibliography{main}

\appendix
\section{Appendix}

\subsection{Proofs}

\begin{proof}[Proof of \cref{thm:region-fcp}]
 \begin{align*}
  \gamma_{\phi}
  &= \int_{\tz \in H_0}\P_{\tz}(\phi(\X) = 1)f_{\theta}(\tz)d\tz
  +\int_{\to \in H_1}\P_{\to}(\phi(\X) = 0)f_{\theta}(\to)d\to 
  & \text{\cref{defn:level}} \\
  &= \int_{\tz \in H_0}
  \P_{\tz}(R(\X) \subset H_0^c)f_{\theta}(\tz)d\tz
  +\int_{\to \in H_1}
  \P_{\to}(R(\X) \subset H_0)f_{\theta}(\to)d\to
  & \text{\cref{defn:agnostic-region}} \\
  &\leq \int_{\tz \in H_0}
  \P_{\tz}(\tz \notin R(\X))f_{\theta}(\tz)d\tz
  +\int_{\to \in H_1}
  \P_{\to}(\to \notin R(\X))f_{\theta}(\to)d\to \\
  &\leq \int_{\ts \in \Theta} 
  \P_{\ts}(\ts \notin R(\X))f_{\theta}(\ts)d\ts
  = \gamma
  & \text{$R(\X)$ has credibility $(1-\gamma)$}
 \end{align*}
\end{proof}

\begin{theorem}
 \label{thm:cd_2pop}
 Let $\Z = (Z_0, Z_1)$ and 
 $\theta = (\theta_0, \theta_1)$ be such that
 $f(z_0,z_1|\theta_0,\theta_1) 
 = f(z_0|\theta_0)f(z_1|\theta_1)$. Also,
 $H_0: \theta_0 = \theta_1$,
 $Y \sim \text{Bernoulli}(0.5)$ and
 $Z^* = Z_{Y}$.
 If $\theta^* \notin Pg(H_0)$ and
 $\Z \sim f_{\theta^*}$, then
 it is possible to build
 a classifier with accuracy of
 at least $1-\epsilon$ for $Y$
 using $Z^*$.
\end{theorem}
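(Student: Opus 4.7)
The plan is to exhibit a specific classifier for $Y$ based on $Z^*$ whose accuracy is bounded below by $d_{\Z}(\tilde\theta,\theta^*)$ for a carefully chosen $\tilde\theta \in H_0$, and then to invoke $\theta^* \notin Pg(H_0)$ to conclude this accuracy is at least $1-\epsilon$. Writing $\theta^* = (\theta^*_0, \theta^*_1)$ for its two coordinates, the crucial choice is $\tilde\theta = (\theta^*_0, \theta^*_0)$, which lies in $H_0$ and matches $\theta^*$ on the first coordinate; this is what makes the argument go through.

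First I would reduce $d_{\Z}(\tilde\theta, \theta^*)$ to a one-coordinate quantity by exploiting the product structure of the density. Because the $z_0$-factor is identical under $\tilde\theta$ and $\theta^*$, it cancels from the likelihood ratio $f(\Z\mid\tilde\theta)/f(\Z\mid\theta^*)$, leaving $f(Z_1\mid\theta^*_0)/f(Z_1\mid\theta^*_1)$. Since $Z_1\sim f(\cdot\mid\theta^*_0)$ under $\tilde\theta$ and $Z_1\sim f(\cdot\mid\theta^*_1)$ under $\theta^*$, the classification dissimilarity collapses to
\[
d_{\Z}(\tilde\theta,\theta^*) \;=\; \half\,\P_{\theta^*_0}\!\left(\frac{f(Z\mid\theta^*_0)}{f(Z\mid\theta^*_1)} > 1\right) + \half\,\P_{\theta^*_1}\!\left(\frac{f(Z\mid\theta^*_1)}{f(Z\mid\theta^*_0)} > 1\right).
\]

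Second, I would consider the likelihood-ratio classifier $\hat Y(z) = \I[f(z\mid\theta^*_1) \ge f(z\mid\theta^*_0)]$, which depends only on $Z^*$. By the theorem's setup, $Z^*\mid Y=i \sim f(\cdot\mid\theta^*_i)$ with $Y$ equiprobable, so a direct computation of $\P(\hat Y = Y)$ produces the same two-term expression as above but with weak inequalities replacing the strict ones; hence the accuracy of $\hat Y$ is at least $d_{\Z}(\tilde\theta,\theta^*)$. Combining this bound with $\tilde\theta \in H_0$ and $\theta^*\notin Pg(H_0)$, the definition of $Pg(H_0)$ forces $d_{\Z}(\tilde\theta,\theta^*) \geq 1-\epsilon$, which completes the proof.

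The only real subtlety is the strict-versus-weak inequality in the definition of $d_{\Z}$: any deterministic classifier must break ties in the likelihood ratio, producing a non-negative correction to the accuracy relative to $d_{\Z}$. Breaking ties consistently (as in $\hat Y$ above) makes the classifier's accuracy dominate the strict expression $d_{\Z}(\tilde\theta,\theta^*)$, which is the direction needed. The real conceptual content is the cancellation enabled by the choice $\tilde\theta = (\theta^*_0, \theta^*_0) \in H_0$, which reduces the two-coordinate dissimilarity to exactly the Bayes accuracy for the one-coordinate classification problem and thereby bridges the abstract Neyman-Pearson interpretation of $d_{\Z}$ with the concrete construction of $\hat Y$.
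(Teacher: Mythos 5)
Your proposal is correct and follows essentially the same route as the paper's proof: pick the point $(\theta^*_0,\theta^*_0)\in H_0$ matching $\theta^*$ on the shared coordinate, cancel the common factor in the likelihood ratio so the dissimilarity reduces to the two-term Bayes-accuracy expression for classifying $Y$ from $Z^*$, and invoke $\theta^*\notin Pg(H_0)$ to bound it below by $1-\epsilon$. Your explicit tie-breaking argument showing the likelihood-ratio classifier's accuracy dominates the strict-inequality expression is a small refinement of the paper's appeal to the Bayes classifier, not a different approach.
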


\begin{proof}
 If $\ts \notin Pg(H_0)$, then
 for every $\bar{\theta} \in H_0$,
 $CD(\bar{\theta},\ts) \geq 1-\epsilon$.
 In particular, by choosing
 $\bar{\theta} = (\tso, \tso)$, obtain that
 $CD((\tso,\tso), \ts) \geq 1-\epsilon$.
 Therefore,
 \begin{align}
  \label{eq:2pop}
  1 - \epsilon 
  &\leq CD((\tso,\tso), \ts) \nonumber \\
  &= 0.5\left(\P_{(\tso,\tso)}\left(\frac{f_{(\tso,\tso)}(\Z)}{f_{\ts}(\Z)} > 1 \right)
 + \P_{\ts}\left(\frac{f_{\ts}(\Z)}{f_{(\tso,\tso)}(\Z)} > 1 \right) \right)
  & \text{\cref{defn:dissim}} \nonumber \\
  &= 0.5\left(\P_{(\tso,\tso)}\left(\frac{f_{\tso}(Z_1)f_{\tso}(Z_2)}{f_{\tso}(Z_1)f_{\tst}(Z_2)} > 1\right)
 + \P_{\ts}\left(\frac{f_{\tso}(Z_1)f_{\tst}(Z_2)}{f_{\tso}(Z_1)f_{\tso}(Z_2)} > 1 \right) \right)
 \nonumber \\
  &= 0.5\left(\P_{\tso}\left(\frac{f_{\tso}(Z_2)}{f_{\tst}(Z_2)} > 1 \right)
 + \P_{\tst}\left(\frac{f_{\tst}(Z_2)}{f_{\tso}(Z_2)} > 1 \right) \right)
 \end{align}
 The proof follows since \cref{eq:2pop} is
 the accuracy of the Bayes classifier
 \citep{Wasserman2013} for $Y$ using $Z^*$.
\end{proof}

\subsection{Approximate pragmatic hypotheses}
\label{sec:approx-pragm}

\begin{algorithm}
\caption{ \small Approximate pragmatic hypothesis computation for $H_0:\theta(1)=\theta_0$}\label{alg:pragmatic}
  \algorithmicrequire \ {\small  
  Null hypothesis parameter value $\theta_0$;
  estimates of $\theta(2),\ldots,\theta(d)$ based on
  the observed data, 
  $\widehat{\theta}(2),\ldots,\widehat{\theta}(d)$;
  dissimilarity threshold $0.5 \leq \epsilon \leq 1$;
  function log\_f($\z;\theta$) that computes the log-likelihood function of the new experiment;
  function generate\_samples($\theta$) that generates
  new samples $\Z$ from the distribution $f(\z|\theta)$; 
  number of Monte Carlo simulations $B$}
  \\
  \algorithmicensure \ {\small 
  Approximate pragmatic hypothesis
  $Pg^*(H_0)$}
  \begin{algorithmic}[1]
    \State Let $\bar{\theta}_0 \gets (\theta_0,\widehat{\theta}(2),\ldots,\widehat{\theta}(d))$
    \For{$i = 1,\ldots,B$}  
    \State  $\z^0_i = $ generate\_samples($\bar{\theta}_0$)
    \EndFor
    \State Let $Pg^*(H_0) \gets \emptyset$
    \For{$\theta^* \in \mathbb{R}$} 
    \State Let $\bar{\theta}^* \gets (\theta^*,\widehat{\theta}(2),\ldots,\widehat{\theta}(d))$
    \For{$i = 1,\ldots,B$}  
    \State  $\z^*_i = $ generate\_samples($\bar{\theta}^*$)
    \EndFor
    \State Let $\text{correct}_{\theta_0}\gets
    \text{mean}\left(\I\left(\text{log\_f}(\z^0_i;\bar{\theta}_0)>\text{log\_f}(\z^0_i;\bar{\theta}^*)\right)_{i=1}^B\right)$
    \State Let $\text{correct}_{\theta^*}\gets
    \text{mean}\left(\I\left(\text{log\_f}(\z^*_i;\bar{\theta}^*)>\text{log\_f}(\z^*_i;\bar{\theta}_0)\right)_{i=1}^B\right)$
    \State Let $\text{dist} = 2^{-1}\left(\text{correct}_{\theta_0} + \text{correct}_{\theta^*}\right)$
    \If{dist$<\epsilon$}
    \State $Pg^*(H_0) \gets Pg^*(H_0) \cup \{\theta^*\}$
    \EndIf
    \EndFor
    \State \textbf{return} $Pg^*(H_0)$
  \end{algorithmic}
\end{algorithm}

\end{document}